\theoremstyle{change}
\newtheorem{defn}[equation]{Definition}
\newtheorem{theorem}[equation]{Theorem}
\newtheorem{prop}[equation]{Proposition}
\newtheorem{lemma}[equation]{Lemma}
\theoremstyle{nonumberplain}
\newtheorem{proof}{Proof}
\crefname{equation}{equation}{equations}
\crefname{eg}{example}{examples}
\crefname{defn}{definition}{definitions}
\crefname{prop}{proposition}{propositions}
\crefname{thm}{Theorem}{Theorems}
\crefname{lemma}{lemma}{lemmas}
\crefname{cor}{corollary}{corollaries}
\crefname{remark}{remark}{remarks}
\crefname{section}{Section}{Sections}
\crefname{subsection}{Section}{Sections}
\numberwithin{equation}{section}
\tikzset{dot/.style={circle,draw,fill,inner sep=1pt}}
\newcommand\ket[1]{\mid #1 \rangle}
\newcommand\setof[1]{\{ #1 \}}
\newcommand\lt{<}
\newcommand\abs[1]{ \mid #1 \mid }
\title{Completeness of the Six Vertex Model with Reflecting Boundary Conditions}
\author{Ammar Husain \thanks{Electronic address: \texttt{ahusain@berkeley.edu}}}
\date{}
\begin{document}
\maketitle

\tikzset{->-/.style={decoration={
  markings,
  mark=at position #1 with {\arrow{>}}},postaction={decorate}}}
\tikzset{-<-/.style={decoration={
  markings,
  mark=at position #1 with {\arrow{<}}},postaction={decorate}}}

\begin{abstract}

In this note, we prove the completeness of Bethe vectors for the six vertex model with diagonal reflecting boundary conditions. We show that as inhomogeneity parameters get sent to infinity in a successive order the Bethe vectors give a complete basis of the space of states.

\end{abstract}

\section{Introduction}

The spectrum of quantum spin chains and properties of integrability are sensitive to boundary conditions. A characterization of such boundary conditions comes up in Cherednik \cite{Cherednik84} and Sklyanin \cite{Sklyanin88} .\par

In this paper we prove that the set of eigenvectors constructed in \cite{Sklyanin88} is asymptotically complete. We consider an inhomogeneous spin chain with inhomogeneities in a sector $t_1 \cdots t_N$ following $0 < Re(t_1) < \cdots << Re(t_N)$. The Bethe vectors for such a spin chain form a basis for the entire Hilbert space. \par

The plan of this paper is as follows. In \cref{section 2} we recall facts about the inhomogeneous six-vertex model with reflecting boundaries. In \cref{section3} we describe the asymptotics in this sector to solutions of Bethe equations. The Bethe vectors and the proof of completeness is contained in \cref{section4}.

\section{The Six Vertex Model with Reflecting Boundary} \label{section 2}

\subsection{Notation}
The Boltzmann weights are parameterized with the R matrix
\begin{equation*} \label{RMatrix}
R = \begin{pmatrix}
b(x+\eta)&0&0&0\\
0&b(x)&b(\eta )&0\\
0&b(\eta )&b(x)&0\\
0&0&0&b(x+\eta )\\
\end{pmatrix}
\end{equation*}
where $b(x) = \text{sinh} \; x$, $z=e^x$, $a_i = e^{t_i}$ and $q=e^\eta$\\

The R matrix satisfies the Yang Baxter Equation. The Reflection matrix K must satisfy the reflection equation.
\begin{eqnarray*}
R_{12} (u) R_{13} (u+v) R_{23}(v) &=& R_{23} (v) R_{13} (u+v) R_{12} (u)\\
R_{12} (u-v) K_1 (u) R_{21} (u+v) K_2 (v) &=& K_2 (v) R_{12} (u+v) K_1 (u) R_{21} (u-v)
\end{eqnarray*}

Assuming that K is diagonal leads to the one parameter family of solutions.
\begin{equation*} \label{KMatrix}
K = \begin{pmatrix}
b(x+ \xi )&0\\
0&-b(x-\xi )\\
\end{pmatrix}
\end{equation*}

\begin{defn}[Boundary Monodromy Matrices].\\

The monodromy matrix for a single row with inhomegeneities $t_1 \cdots t_N$ on those respective columns is given by\\
\begin{equation*} \label{SingleRow}
T(x, t_1 \cdots t_N ) = R_{0N} (x-t_N ) \cdots R_{01} (x-t_1 ) = 
\begin{pmatrix}
A(x)&B(x)\\
C(x)&D(x)\\
\end{pmatrix}
\end{equation*}

The double row monodromy matrix takes into effect the reflection at one end. It is defined as

\begin{equation*} \label{DoubleRow}
U(x,t_1 \cdots t_N, \xi_+ ) = T(x,\vec{t} ) K ( x - \frac{\eta}{2} , \xi_+ ) \sigma_2 T(-x,\vec{t} ) \sigma_2 =
\begin{pmatrix}
\mathcal{A}(x)&\mathcal{B}(x)\\
\mathcal{C}(x)&\mathcal{D}(x)\\
\end{pmatrix}
\end{equation*}

\end{defn}

\subsection{Commutation Relations and Bethe Ansatz}

Sklyanin proved the reflection equation which implies the following commutation relations for the operator valued entries of the 2 by 2 double row mondodromy matrix. \cite{Sklyanin88}
\begin{eqnarray*}
\mathcal{A} (u) \mathcal{B} (v) &=& \frac{b(u-v-\eta ) b( u + v - \eta )}{b(u-v) b(u+v)} \mathcal{B} (v) \mathcal{A} (u)\\
&+& \frac{b(\eta) b(u+v-\eta )}{b(u-v) b(u+v) } \mathcal{B} (u) \mathcal{A} (v) - \frac{b(\eta)}{b(u+v)} \mathcal{B} (u) \mathcal{D}(v)\\
\mathcal{D} (u) \mathcal{B} (v) &=& \frac{b(u-v+\eta ) b( u + v + \eta )}{b(u-v) b(u+v)} \mathcal{B} (v) \mathcal{D} (u) - \frac{b(2 \eta) b(\eta )}{b(u-v) b(u+v) } \mathcal{B} (v) \mathcal{A} (u)\\
&+& \frac{b(\eta)b(u+v+\eta )}{b(u-v) b(u+v)} \mathcal{B} (u) \mathcal{D}(v) + \frac{b(u-v+2\eta ) b(\eta )}{b(u-v) b(u+v)} \mathcal{B} (u) \mathcal{A} (v)\\
\end{eqnarray*}

The relations are simpler if we change variables to use $\tilde{\mathcal{D}} (u) = \mathcal{D} (u) b(2u) - \mathcal{A}(u) b(\eta)$ instead of $\mathcal{D}$\\
\begin{eqnarray*}
\tilde{\mathcal{D}} (u) \mathcal{B} (v) &=& \frac{b(u-v+\eta ) b(u+v+\eta )}{b(u-v)b(u+v)}\mathcal{B} (v) \tilde{\mathcal{D}} (u) + \frac{b(\eta ) b( 2u + \eta ) b(2v - \eta )}{b(u+v) b(2v)} \mathcal{B} (u) \mathcal{A} (v) - \frac{b(\eta ) b(2u + \eta )}{b(u-v) b(2v)} \mathcal{B}(u) \tilde{\mathcal{D}} (v)
\end{eqnarray*}

The transfer matrix associated with the above monodromy matrix illustrated in Figure \ref{fig:doubletransfer} is given by
\begin{eqnarray*}
t(u, \xi_+ , \xi_- ) = tr( K(u+\frac{\eta}{2} , \xi_+ ) U_- (u)) &=& b(u+\xi_+ + \frac{\eta}{2} ) \mathcal{A} (u) - b(u-\xi_+ + \frac{\eta}{2} ) \mathcal{D} (u)\\
&=& \frac{b(2u+\eta)}{b(2u)} b(u+\xi_+ - \frac{\eta}{2} ) \mathcal{A} (u) - \frac{1}{b(2u)} b ( u-\xi_+ + \frac{\eta}{2} ) \tilde{\mathcal{D}} (u)
\end{eqnarray*}

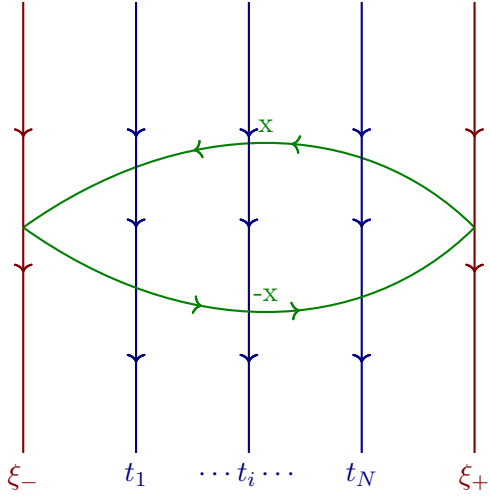
\begin{figure}[H]
\centering
 \begin{tikzpicture}[scale=.3,color=green!50!black]    
    \draw[thick,-<-=.4,-<-=.6] (0,0) .. node[above] {x} controls (7,5) and (15,5) .. (20,0);
    \path (0,10) -- (0,-10) [draw,thick,->-=.3,->-=.6,color=red!50!black] node[below] {$\xi_-$};
    \path (5,10) -- (5,-10) [draw,thick,->-=.3,->-=.5,->-=.8,color=blue!50!black] node[below] {$t_1$};
    \path (10,10) -- (10,-10) [draw,thick,->-=.3,->-=.5,->-=.8,color=blue!50!black] node[below] {$\cdots t_i \cdots$};
    \path (15,10) -- (15,-10) [draw,thick,->-=.3,->-=.5,->-=.8,color=blue!50!black] node[below] {$t_N$};
    \path (20,10) -- (20,-10) [draw,thick,->-=.3,->-=.6,color=red!50!black] node[below] {$\xi_+$};
    \draw[thick,->-=.4,->-=.6] (0,0) .. node[above] {-x} controls (7,-5) and (15,-5) .. (20,0);
  \end{tikzpicture}
\caption{The blue lines are decorated with $t_i$, The two red lines with $\xi_\pm$. Each crossing in this diagram represents a factor in the transfer matrix.} \label{fig:doubletransfer}
\end{figure}

On an off shell Bethe vector built up from the psuedovacuum $\Omega = e_-^{\otimes N}$ as $\ket{v_1 \cdots v_m} = \mathcal{B} (v_1 ) \mathcal{B} (v_2 ) \cdots \mathcal{B} (v_m) \Omega$ the result will be of the form:

\begin{align*}
t( u , \xi_+ , \xi_- ) \mathcal{B} (v_1 ) \mathcal{B} (v_2 ) \cdots \mathcal{B} (v_m)  e_-^{\otimes N} &= \Lambda(u) \mathcal{B} (v_1 ) \mathcal{B} (v_2 ) \cdots \mathcal{B} (v_m) \Omega\\
&+ \sum_{j=1}^m \Lambda_j \ket {u, v_1 \cdots \hat{v}_j \cdots v_m} \label{EigenvalueEquation} \tag{$1$}\\
\Lambda (u) = \frac{b(2u+\eta )}{b(2u)} b(u+\xi_+ - \frac{\eta}{2} ) &\Delta_+ (u) \prod_{j=1}^m \frac{b(u-v_j-\eta ) b(u + v_j - \eta )}{b(u-v_j)b(u+v_j)}\\
- \frac{1}{b(2u)} b(u-\xi_+ + \frac{\eta}{2} ) &\Delta_- (u) \prod_{j=1}^m \frac{b(u-v_j+\eta ) b(u + v_j + \eta )}{b(u-v_j)b(u+v_j)}\\
\Lambda_j &= res_{u=v_j} \Lambda(u)
\end{align*}

where $\Delta_\pm$ are the eigenvalues for $\mathcal{A}$ and $\tilde{\mathcal{D}}$ on the highest weight state respectively.\\

\begin{eqnarray*}
\Delta_+ (u) &=& b(u+\xi_- \frac{\eta}{2}) \alpha(u) \delta(-u)\\
\Delta_- (u) &=& - b(2u-\eta ) b(u - \xi_+ \frac{\eta}{2}) \alpha(-u) \delta(u)\\
\end{eqnarray*}

For the vector $\ket{v_1 \cdots v_m}$ to actually be an eigenvector of $t(u,\xi_+ , \xi_-)$ ( on shell) we must have that all the $\Lambda_j$ be zero. This happens when the $v_i$ satisfy the Bethe equations which can be realized by ensuring that the poles in $\Lambda(u)$ all cancel out.\\

From Equation \ref{EigenvalueEquation}, it is clear that for the vector constructed above to be an eigenstate, the $v_i$ need to satisfy:

\begin{eqnarray*}
\frac{b( v_m + \xi_+ - \frac{\eta}{2} ) b( v_m + \xi_- - \frac{\eta}{2} )}{b( v_m - \xi_+ + \frac{\eta}{2} )  b( v_m - \xi_- + \frac{\eta}{2} ) } \prod_{i=1}^N \frac{ b(v_m - t_i + \eta ) b( -v_m - t_i - \eta )}{b( v_m - t_i - \eta ) b( -v_m - t_i + \eta )} &=& \prod_{k \neq m} \frac{ b(v_m - v_k + \eta ) b( v_m + v_k + \eta ) }{b (v_m - v_k - \eta ) b( v_m + v_k - \eta )}
\end{eqnarray*}

Indeed, looking at the apparent poles when $u=v_m$ and ignoring the $-v_m$ gives the Bethe equations without extra redundancy.
\begin{eqnarray*}
- \frac{ b( v_m + \xi_+ - \frac{\eta}{2}) }{ b( v_m - \xi_+ + \frac{\eta}{2} )} \frac{b(2v_m - \eta ) \Delta_+ ( v_m)}{\Delta_- (v_m) } &=& \prod_{k \neq m} \frac{ b(v_m - v_k + \eta ) b( v_m + v_k + \eta ) }{b (v_m - v_k - \eta ) b( v_m + v_k - \eta )}\\
\end{eqnarray*}
\begin{eqnarray*}
&-& \frac{ b( v_m + \xi_+ - \frac{\eta}{2} ) }{ b( v_m - \xi_+ + \frac{\eta}{2})} b(2v_m - \eta ) \frac{b(v_m + \xi_- - \frac{\eta}{2} ) \alpha (v_m) \delta (-v_m )}{-b(2v_m -\eta ) b(v_m - \xi_- + \frac{\eta}{2} ) \alpha (-v_m) \delta (v_m)} \\&=& \prod_{k \neq m} \frac{ b(v_m - v_k + \eta ) b( v_m + v_k + \eta ) }{b (v_m - v_k - \eta ) b( v_m + v_k - \eta )}\\
\end{eqnarray*}

The associated eigenvalues of the transfer matrix are as in \cite{Sklyanin88}:

\begin{eqnarray*}
\Lambda (u) &=& \frac{b(2u+\eta )}{b(2u)} b(u+\xi_+ - \frac{\eta}{2} ) b(u+\xi_- \frac{\eta}{2}) \alpha(u) \delta(-u) \prod_{j=1}^m \frac{b(u-v_j-\eta ) b(u + v_j - \eta )}{b(u-v_j)b(u+v_j)}\\
&+& \frac{1}{b(2u)} b(u-\xi_+ + \frac{\eta}{2} ) b(2u-\eta ) b(u - \xi_+ \frac{\eta}{2}) \alpha(-u) \delta(u) \prod_{j=1}^m \frac{b(u-v_j+\eta ) b(u + v_j + \eta )}{b(u-v_j)b(u+v_j)}\\
\end{eqnarray*}

\section{Asymptotic Structure of Solutions to Bethe Equations} \label{section3}

We first seek to count the number of solutions to the Bethe equations deep in this chamber $0< Re(t_1) << Re(t_2) << Re(t_3) \cdots Re(t_N)$. This is to show that there $2^N$ solutions for the collections of $\setof{v_i}$ as desired. In the next section, we consider the associated vectors.

\begin{prop}\label{Div01}
\par Let $Re(t_N) \to + \infty$.\\
\begin{itemize}
\setlength\itemsep{-1em}
\item If the $v_i$ remain finite in this limit, the $\setof{v_i}$ satisfy the Bethe equations for a chain of length $N-1$\\
\item If one of them $v_M$ diverges as $t_N + O(1)$, it has asymptotic behavior of the form $e^{v_M} \to w_M e^{t_N} + o(e^{t_N} ) $ for fixed $w_M$ given below and the remaining rapidities satisfy the Bethe equations for the N-1 length chain.\\
\begin{eqnarray*}
w_M^2 &=& q^2 \frac{ q^{4M} - e^{2 \xi_+ + \xi_-} q^{2N}}{ q^{4M} - e^{2 \xi_+ + \xi_-} q^{2N+4}}\\
\end{eqnarray*}
\item There may also be multiple divergences. In this case there is a decoupling between solving the system for the N-1 length chain and a system for the $w_k$ of the divergences.\\
\end{itemize}

\end{prop}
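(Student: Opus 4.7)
The plan is to analyze the Bethe equations of the previous section under $\text{Re}(t_N) \to +\infty$, case by case according to which rapidities stay finite and which diverge. The only input beyond the equations themselves is the elementary asymptotic $b(x) \sim e^x/2$ for $\text{Re}(x) \to +\infty$ and $b(x) \sim -e^{-x}/2$ for $\text{Re}(x) \to -\infty$, applied factor by factor to both sides.

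For the first bullet I assume every $v_i$ stays bounded. The only $t_N$-dependent part of the Bethe equation at index $m$ is the $i = N$ factor
\[
\frac{b(v_m - t_N + \eta)\,b(-v_m - t_N - \eta)}{b(v_m - t_N - \eta)\,b(-v_m - t_N + \eta)},
\]
whose four arguments all have real parts tending to $-\infty$. The two ratios give $e^{-2\eta}$ and $e^{+2\eta}$, which cancel to $1$, so the surviving equations are precisely the Bethe equations for the chain of length $N - 1$ with inhomogeneities $t_1, \ldots, t_{N-1}$.

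For the second bullet I set $v_M = t_N + \log w_M + o(1)$ with the other $v_m$ bounded and split into two families. For $m \neq M$ the $i = N$ factor on the left collapses as above, and the extra contribution from $k = M$ on the right,
\[
\frac{b(v_m - v_M + \eta)\,b(v_m + v_M + \eta)}{b(v_m - v_M - \eta)\,b(v_m + v_M - \eta)},
\]
also collapses to $1$ because $v_m - v_M \to -\infty$ and $v_m + v_M \to +\infty$ contribute reciprocal factors $e^{-2\eta}$ and $e^{+2\eta}$. Hence the finite rapidities satisfy the $N - 1$ chain equations. For $m = M$ every factor except the $i = N$ factor on the left has a divergent argument and reduces to a pure exponential in $v_M$, $\xi_\pm$, $\eta$, $t_i$; the $i = N$ factor is the sole carrier of $w_M$ dependence and becomes a rational function of $w_M^2 q^{\pm 2}$. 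Collecting leading exponentials, clearing denominators, and solving the resulting linear equation in $w_M^2$ yields the stated formula.

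For the third bullet, with several divergent rapidities $v_{M_1}, \ldots, v_{M_r}$, the equations for finite $v_m$ lose their divergent contributions one factor at a time by the same $e^{\pm 2\eta}$ cancellations, and so again yield the $N - 1$ chain system. Each equation for a divergent $v_{M_j}$ retains nontrivial contributions only from the $i = N$ factor and from the other divergent $v_{M_k}$, and both depend only on the $w$-variables, so the $w$-system closes and is decoupled from the finite system. The main obstacle is the bookkeeping in the second bullet, particularly tracking the exponentials of $\xi_\pm$ contributed by both the $\xi$-boundary block and by $\Delta_\pm$, together with the signs coming from $b(x) \sim -e^{-x}/2$ at $-\infty$; once these are correctly tallied, the algebra for $w_M^2$ is immediate.
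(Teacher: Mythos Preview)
Your proposal is correct and follows essentially the same route as the paper: factor-by-factor asymptotics of the Bethe equations using $b(x)\sim \pm e^{\pm x}/2$, with the same three-case split and the same observation that only the $i=N$ factor carries $w_M$-dependence in the $m=M$ equation. The paper additionally writes out the intermediate equation $e^{2\xi_+}q^{-1}e^{2\xi_-}q^{-1}\frac{w_Mq-w_M^{-1}q^{-1}}{w_Mq^{-1}-w_M^{-1}q}q^{2N}=q^{4(M-1)}$ before solving for $w_M^2$, and explicitly records the genericity condition on $q,\xi_\pm$ needed for the solution to be nonzero and nonsingular; you should at least flag that such a condition is required, since otherwise the claimed formula for $w_M^2$ can degenerate.
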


\begin{proof}.\par
First we take a limit as $Re(t_N)$ goes to $+\infty$ but the $v_i$ are finite. Then the only factor in the Bethe equations which involves $t_N$ is
\begin{equation*}
\frac{b(v_m - t_N + \eta )b(-v_m - t_N - \eta )}{b(v_m - t_N - \eta ) b(-v_m - t_N + \eta )}
\end{equation*}
which tends to 1 so we see the corresponding Bethe equations for a chain of length $N-1$ and still M rapidities.\\

For the second part, assume that $v_M$ goes to $+\infty$ as well and we define $w_M$ so that $e^{v_M-t_N}=w_M$. If we can solve for $w_M$ then we will see how to get rid of one inhomogeneity and one rapidity in the Bethe equations.\\

The equations for $m \neq M$ see the $t_N$ factor go to 1 on the left hand side just as before and the $k=M$ factor go to 1 on the right hand side so we see the $M-1$ equations for a chain of length $N-1$ \footnote{
This is unlike the quasiperiodic case when the left hand side gives a factor of $q^2$ causing the horizantal magnetic field to be modified. \cite{Reshetikhin10}}. The $m=M$ equation then determines $w_M$:

\begin{eqnarray*}
e^{2\xi_+ } q^{-1} e^{2 \xi_-} q^{-1} \frac{w_M q - w_M^{-1}q^{-1}}{w_M q^{-1} - w_M^{-1} q} q^{2N} &=& q^{4(M-1)}\\
w_M^2 &=& q^2 \frac{ q^{4M} - e^{2 \xi_+ + \xi_-} q^{2N}}{ q^{4M} - e^{2 \xi_+ + \xi_-} q^{2N+4}}\\
\end{eqnarray*}

which is nonsingular and nonzero provided that

\begin{eqnarray*}
e^{2 \xi_+ + 2 \xi_- } q^{-1} - q^{4M-2N-5} &=& 0\\
\end{eqnarray*}

so we assume that the parameters $q$ and $\xi_\pm$ are chosen away from this bad locus. This solution is unique up to sign and fixes the behavior of $e^{v_M}$ to be $w_M e^{t_N} + o (e^{t_N} )$.\\

If we only had these possibilities we sould have acheived $2^N$ solutions asymptotically in this sector. However we  do have other solutions when multiple $v_i$ may also diverge with $t_N$. For definiteness, say they are the last J+1 in the list and call this index set $\bold{D}$. As before, parameterize these divergences in the same way: $e^{v_j} = w_j e^{t_N} + o(e^{t_N} )$. In this situation, the Bethe equations for any of the nondivergeing rapidities give the Bethe equations of a chain of length $N-1$ with rapidities $v_1$ through $v_{M-J-1}$ and no dependence on the $w_j$.

Considering the Bethe equations for any of the diverging rapidities gives

\begin{eqnarray*}
e^{2 \xi_+} q^{-1} e^{2 \xi_-} q^{-1} q^{2N} \frac{ w_k q - w_k^{-1} q^{-1} }{ w_k q^{-1} - w_k^{-1} q} &=& q^{4( M - J-1)} q^{2J} \prod_{j \in \bold{D} \; j \neq k} \frac{ b(v_k - t_N + t_N - v_{j} + \eta)}{ b(v_k - t_N + t_N - v_{j} - \eta)}\\
e^{2 \xi_+} q^{-1} e^{2 \xi_-} q^{-1} q^{2N} \frac{ w_k q - w_k^{-1} q^{-1} }{ w_k q^{-1} - w_k^{-1} q} &=& q^{4( M - J-1)} q^{2J} \prod_{j \in \bold{D} \; j \neq k} \frac{w_k w_{j}^{-1} q - w_k^{-1} w_{j} q^{-1}}{w_k w_{j}^{-1} q^{-1} - w_k^{-1} w_{j} q}
\end{eqnarray*}

This gives J quadratic equations used to solve for $w_j \; \forall j \in \bold{D}$. This system does have solutions, but we will see that the associated Bethe vectors are not independent asymptotically. These solutions would contribute to asymptotic completeness in the tensor product of a spin chain with tensorands being Verma modules.

\vspace{-\belowdisplayskip}\[\]
\end{proof}

\section{Asymptotics of Bethe Vectors} \label{section4}

Because the multiple divergences were not excluded from the system of $J+1$ equations above, we must show that the associated vectors are not included. This is done by isolating the dependence of $t_N$ in the Bethe vectors. This then allows us to show that in the case of multiple divergences, these vectors vanish.

\subsection{Isolating the contributions from $t_N$}

First let us change parameterizations to use the following formula of \cite{Reshetikhin13}.

\begin{eqnarray*}
\bar{\mathcal{B}}^{\xi (M)} ( \vec{x}, \vec{t} ) \Omega &=& \sum_{\epsilon=\pm^M} \sum_{\bold{J} \subset \setof{1 \cdots M}} \mathcal{Y}^{\xi, \epsilon , \bold{J}}  (\vec{x}, \vec{t} ) \prod_{i \in \bold{J}^c} B_N ( - \epsilon_i x_i - \frac{\eta}{2} , t_N ) \prod_{j \in \bold{J}} \hat{B} ( - \epsilon_j x_j - \frac{\eta}{2} , \vec{t} ) \Omega\\
\mathcal{Y}^{\xi , \epsilon , \bold{J}} ( \vec{x} , \vec{t} ) &=& \prod_{i=1}^M \big( \epsilon_i b( \xi - \epsilon_i x_i - \frac{\eta}{2} )\prod_{r=1}^N \frac{b( \epsilon_i x_i - t_r - \frac{\eta}{2})}{b( \epsilon_i x_i - t_r + \frac{\eta}{2})} \big)\\
&\times& \prod_{1 \leq i \lt j \leq M} \frac{b( \epsilon_i x_i + \epsilon_j x_j + \eta)}{b( \epsilon_i x_i + \epsilon_j x_j )} Y^\bold{J} ( ( -\epsilon_i x_i - \frac{\eta}{2} ), \vec{t} )\\
Y^\bold{J} ( \vec{x} , \vec{t} ) &=& \prod_{i \in \bold{J}} \frac{b(x_i-t_N)}{b(x_i-t_N + \eta)} \prod_{(i,j) \in \bold{J} \times \bold{J}^c} \frac{b(x_i-x_j+\eta )}{b(x_i-x_j)}
\end{eqnarray*}

where $B_N$ is the operator on just the Nth site and $\hat{B}$ is the matrix element of the double row monodromy matrix with the N'th site omitted removing $t_N$ dependence.\\

With all the $t_N$ dependences now isolated, we may compute asymptotics as $t_N \to + \infty$ for either $\vec{x}$ all remaining finite or some $x_j \to \infty$ in $e^{x_j - t_N} \to w_j$.\\

\subsection{Vanishing for Multiple Divergences}

\begin{theorem}
Zero or one $v_i$ diverging with $t_N$ are the only two linearly independent possibilities. Restricting to these implies the Bethe equations give an asymptotically complete set of solutions.
\end{theorem}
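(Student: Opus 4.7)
The plan is to combine a vanishing argument for multiple divergences with an inductive counting argument on the chain length $N$. Throughout we work with the representation of $\bar{\mathcal{B}}^{\xi,(M)}(\vec{x},\vec{t})\Omega$ from the previous subsection, in which all dependence on $t_N$ is collected inside the factors $B_N(-\epsilon_i x_i - \tfrac{\eta}{2},t_N)$, the scalar coefficients $\mathcal{Y}^{\xi,\epsilon,\bold{J}}$, and the factor $Y^{\bold{J}}$; the remaining operators $\hat{B}$ act on sites $1,\dots,N-1$ only.

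First I would handle the vanishing step, reproducing the content of the suppressed proposition. After the standard rescaling $\bar{\mathcal{B}}^{\xi}(x,\vec t)$ versus $\mathcal{B}^{\xi}(x+\tfrac{\eta}{2},\vec t)$ that absorbs the leading $e^{x}$ growth, I would track the $e^{x_i}$ scaling of each term in the sum over $(\epsilon,\bold{J})$ when a subset $\bold{D}$ of the $x_i$ diverges as $x_i = t_N + \log w_i + o(1)$. The coefficients $\mathcal{Y}^{\xi,\epsilon,\bold{J}}$ all grow like $\prod_{i\in\bold{D}} e^{x_i}$ through their boundary factor $b(\xi-\epsilon_i x_i-\tfrac{\eta}{2})$, while $\hat{B}(-\epsilon_i x_i -\tfrac{\eta}{2},\vec t) = O(e^{-x_i})$. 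Hence any $\bold{J}$ that contains an index in $\bold{D}$ contributes at worst $O(e^{-t_N})$ and drops out asymptotically. The only terms surviving are those in which every divergent index is placed in $\bold{J}^c$, so the associated $B_N(-\epsilon_i x_i -\tfrac{\eta}{2},t_N)$ act on the single $N$-th site. Since that site carries a spin-$\tfrac{1}{2}$ representation, $B_N$ is nilpotent of index two, so at least two divergent rapidities force the entire surviving sum to vanish. This proves that Bethe vectors with $|\bold{D}|\geq 2$ are zero asymptotically, hence dependent (trivially) on the others.

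Next I would turn the surviving cases into a count. Let $V_N$ denote the asymptotic Bethe vectors for the chain of length $N$, and induct on $N$ with the claim that $V_N$ consists of $2^N$ linearly independent vectors that exhaust the Hilbert space $(\mathbb{C}^2)^{\otimes N}$. By \cref{Div01}, a solution for the length-$N$ chain is specified by choosing either $|\bold{D}|=0$, in which case the rapidities $\{v_i\}$ solve the length-$(N-1)$ Bethe equations, or $|\bold{D}|=1$, in which case the non-divergent rapidities again solve the length-$(N-1)$ equations and the divergent one is pinned by the explicit quadratic formula for $w_M$. Both branches are therefore in bijection with the asymptotic Bethe data for length $N-1$, which by the induction hypothesis enumerates a basis of $(\mathbb{C}^2)^{\otimes(N-1)}$, yielding $2^{N-1}+2^{N-1}=2^N$ candidates.

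Finally I would verify linear independence of these $2^N$ asymptotic vectors, which is the main obstacle. The natural route is to use the decomposition above to separate the $N$-th tensor factor: the $|\bold{D}|=0$ family has every $\epsilon_i x_i$ bounded and so its leading-order Bethe vector factors as $\Psi_{N-1}\otimes e_-$, where $\Psi_{N-1}$ is a Bethe vector for the $(N-1)$-chain; the $|\bold{D}|=1$ family is dominated by the unique surviving term with the divergent rapidity in $\bold{J}^c$, giving $\Psi'_{N-1}\otimes (B_N \cdot e_-) = \Psi'_{N-1}\otimes e_+$ up to a nonvanishing scalar (this is where the non-degeneracy condition from \cref{Div01} on the parameters $q,\xi_\pm$ is used). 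The two classes therefore lie in complementary subspaces $(\mathbb{C}^2)^{\otimes(N-1)}\otimes e_-$ and $(\mathbb{C}^2)^{\otimes(N-1)}\otimes e_+$, and within each class linear independence follows from the inductive hypothesis applied to $\Psi_{N-1}$ and $\Psi'_{N-1}$. Counting $2^N$ linearly independent vectors in a $2^N$-dimensional Hilbert space finishes the completeness assertion. The subtle point that requires care is the non-vanishing of the leading coefficient for the single-divergence case, so that the tensor factor $e_+$ really appears; this is a finite computation of the surviving $\mathcal{Y}^{\xi,\epsilon,\bold{J}}$ at $\bold{J}^c=\{M\}$, and is where one invokes the genericity of $q,\xi_\pm$ assumed in \cref{Div01}.
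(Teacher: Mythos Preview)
Your argument is correct and in broad outline matches the paper's: induct on $N$, use \cref{Div01} to split solutions into the two branches, and count $2^N$. The differences are in how the two auxiliary steps are handled.

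For the elimination of multiple divergences, you use the nilpotency of $B_N$ on the spin-$\tfrac12$ site to conclude that the unnormalized Bethe vector tends to zero, and then declare it ``trivially dependent.'' The paper instead proves \cref{2DivLinDep,MoreDiv}, which rescale the vanishing vector by the appropriate power of $e^{t_N}$ and show that the resulting nonzero limit is a linear combination of single-divergence Bethe vectors. Your shortcut is legitimate for the completeness statement because once you establish that the zero/one-divergence vectors already span the whole $2^N$-dimensional space, any other eigenvector is automatically in their span; the paper's route gives the stronger information of exactly how the rescaled multiple-divergence states sit inside the single-divergence family, which is relevant for the Verma-module remark in the conclusion but not strictly needed here.

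Conversely, you supply something the paper's proof of this theorem leaves implicit: an explicit linear-independence argument. Your observation that the leading terms of the two branches land in the complementary subspaces $(\mathbb{C}^2)^{\otimes(N-1)}\otimes e_-$ and $(\mathbb{C}^2)^{\otimes(N-1)}\otimes e_+$ is exactly what the dominant-term computations in the paper's Sections on ``All Finite'' and ``Single Diverging Rapidity'' show, and it cleanly reduces independence at level $N$ to the inductive hypothesis at level $N-1$. The paper's proof simply counts $\binom{N-1}{M}+\binom{N-1}{M-1}=\binom{N}{M}$ solutions and asserts they fill the Hilbert space without separately checking independence. Your flagging of the nonvanishing of the leading $\mathcal{Y}$-coefficient in the single-divergence case, and its reliance on the genericity of $q,\xi_\pm$, is also a point the paper does not make explicit at this spot.
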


\begin{proof}

Multiple divergences being linearly dependent on the zero or one case is implied by \cref{2DivLinDep,MoreDiv}. We divert those to the next subsection.

Proceed by induction. Suppose for the induction step that we have already given a chain of length $N-1$ there are $\binom{N-1}{M}$ solutions for the M magnon sector. As $t_N$ goes to infinity, the first two items of \cref{Div01} lends $\binom{N}{M}$ solutions of the M magnon sector of the length N chain interpreted as coming from either M magnons on an N-1 chain or M-1 magnons on an N-1 chain by $\binom{N-1}{M} + \binom{N-1}{M-1}$. Adding up all the M sectors gives the desired $2^N$ dimensional Hilbert space.

\vspace{-\belowdisplayskip}\[\]
\end{proof}

\subsection{The Dominant terms}

In fact we may explicitly give the dominant terms for the three cases of zero, one or many diverging $x_i$. This is also necessary to show how even after rescaling leaves a linearly dependent vector in the case of multiple divergences.

\subsubsection{All Finite}

If every $x_i$ remains finite, $\bold{J}$ needs to be $\setof{1 \cdots M}$ because for any other $\bold{J}$ the vector has prefactor $e^{-t_N}$. This leaves a sum of $2^M$ terms all of which do not affect the Nth site with corrections suppressed as $e^{-t_N}$\\

\begin{eqnarray*}
\mathcal{B}^{\xi (M)} ( \vec{x} , \vec{t} ) \Omega &\approx& q^{-M} \prod_{i=1}^M \bigg( \prod_{r=1}^{N-1} \frac{b(x_i - t_r )}{b(x_i - t_r - \eta )} \bigg) \frac{ b(2x_i )}{ b(\xi - x_i - \eta ) b( 2 x_i - \eta ) } \\
&\sum_{\epsilon=\pm^M}& q^{-2M} \mathcal{Y}^{\xi, \epsilon , \setof{1 \cdots M}}  (\vec{x} - \frac{\eta}{2}, \setof{t_1 \cdots t_{N-1}} ) \prod_{j=1}^M \hat{B} ( - \epsilon_j x_j + \epsilon_j \frac{\eta}{2} - \frac{\eta}{2} , \vec{t} ) \Omega\\
&=& q^{-3M} \mathcal{B}^{\xi (M)} ( \vec{x} , \setof{t_1 \cdots t_{N-1}} ) \Omega_{N-1} \otimes e_+\\
\end{eqnarray*}

\subsubsection{Single Diverging Rapidity}\label{SingleDivergence}

For the single divergence $x_M$, there are only contributions from the $J=\setof{1 \cdots M-1}$ summands. The Mth sign is also fixed to be - in this case. This leaves a sum of $2^{M-1}$ terms all of which flip the Nth site.\\

\begin{eqnarray*}
\bar{\mathcal{B}}^{\xi (M)} ( \vec{x}, \vec{t} ) \Omega &\approx& \sum_{\epsilon=\pm^{M-1}-} \mathcal{Y}^{\xi, \epsilon , \setof{1 \cdots M-1}}  (\vec{x}, \vec{t} ) B_N ( x_M - \frac{\eta}{2} , t_N )  \prod_{j=1}^{M-1} \hat{B} ( - \epsilon_j x_j - \frac{\eta}{2} , \vec{t} ) \Omega\\
\mathcal{B}^{\xi (M)} ( \vec{x} + \frac{\eta}{2} , \vec{t} ) \Omega &\approx& \prod_{i=1}^M \bigg( \prod_{r=1}^{N} \frac{b(x_i - t_r + \frac{\eta}{2})}{b(x_i - t_r - \frac{\eta}{2})} \bigg) \frac{ b(2x_i + \eta )}{ b(\xi - x_i - \frac{\eta}{2} ) b( 2 x_i ) } \\
 &\sum_{\epsilon=\pm^{M-1}-}& \mathcal{Y}^{\xi, \epsilon , \setof{1 \cdots M-1}}  (\vec{x}, \vec{t} ) B_N ( x_M - \frac{\eta}{2} , t_N )  \prod_{j=1}^{M-1} \hat{B} ( - \epsilon_j x_j - \frac{\eta}{2} , \vec{t} ) \Omega\\
\end{eqnarray*}

\subsection{Multiple Diverging Rapidities}

If multiple rapidities diverge, the dominant terms follow a similar pattern as one divergence even though in this case the vector goes to $\vec{0}$ as $t_N \to + \infty$. Again for definiteness let us say $\bold{D} = \setof{ J \cdots M}$. We now have the option of which element of $\bold{D}$ to insert into the $B_N$ factor.\\

\begin{eqnarray*}
\mathcal{B}^{\xi (M)} ( \vec{x} , \vec{t} ) \Omega &\approx& \prod_{i=1}^M \bigg( \prod_{r=1}^{N} \frac{b(x_i - t_r)}{b(x_i - t_r - \eta)} \bigg) \frac{ b(2x_i )}{ b(\xi - x_i ) b( 2 x_i - \eta ) } \\
\sum_{\bold{J} = \setof{1 \cdots J \cdots \hat{k} \cdots M}} &\mathlarger{\sum}_{\substack{\epsilon=\pm^{M}\\ \epsilon_k = -1} }& \mathcal{Y}^{\xi, \epsilon , \bold{J}}  (\vec{x} -\frac{\eta}{2}, \vec{t} ) B_N ( x_k - \eta , t_N )  \prod_{j=1,j \neq k}^{M} \hat{B} ( - \epsilon_j x_j + \epsilon_j \frac{\eta}{2} - \frac{\eta}{2} , \vec{t} ) \Omega\\
\end{eqnarray*}

This vector goes to $0$ as $e^{-st_N}$ with $s=\abs{\bold{D}}-1$. We see what happens if we rescale that and compare to the previous two cases of zero or one divergence. We begin with the case of $\abs{D}=2$.

\begin{lemma}[2 Diverging Rapidities\label{2DivLinDep}]
If both $v_{1,2}$ diverge as $e^{v_i} = w_i e^{t_N}$, then $\mathcal{B}^{\xi (2)} ( v_1  , v_2 , \vec{t} ) \Omega$, then there exist a set of $ \setof{z_{2,i}}$ such that $\sum_{z_{2,i}} \mathcal{B}^{\xi (2)} ( v_1  , z_{2,i} , \vec{t} ) \Omega \propto \mathcal{B}^{\xi (2)} ( v_1  , v_2 , \vec{t} ) \Omega$ and the $\setof{z_{2,i}}$ are all finite in the $t_N \to + \infty$ limit.\\
\end{lemma}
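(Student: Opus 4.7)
The plan is to identify the dominant two-divergence vector as an element of the span of one-divergence vectors (those with $v_1$ still divergent and a single finite second rapidity), and then realise that element as a finite unweighted sum by Lagrange interpolation in $e^{v_2}$.

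First I would apply the dominant-term expansions from the preceding two subsections. Each summand on the left, $\mathcal{B}^{\xi(2)}(v_1, z_{2,i}, \vec{t})\Omega$, is a one-divergence vector (only $v_1$ diverges), so by \cref{SingleDivergence} it takes the form of a common prefactor times $B_N(v_1-\eta/2, t_N)$ times a two-term signed sum of $\hat{B}(-\epsilon z_{2,i}-\eta/2, \vec{t})\Omega$. The right-hand side target $\mathcal{B}^{\xi(2)}(v_1, v_2, \vec{t})\Omega$ expands, in the $|\bold{D}|=2$ dominant formula, as a sum over $k\in\{1,2\}$ of which diverging rapidity is fed into the single $B_N$ factor; in the $k=2$ piece $v_1$ sits inside $\hat{B}$ at a divergent argument. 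Using the standard symmetry of the Bethe vector under permutation of $v_1$ and $v_2$, the $k=2$ piece is the $k=1$ piece with the roles of the two rapidities swapped, so it suffices to match only the $k=1$ piece, which has $v_1$ in $B_N$ exactly as in every summand on the left-hand side.

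The matching then reduces to a statement on the chain of length $N-1$: the divergent-argument $\hat{B}$ acting on $\Omega_{N-1}$ must be written as a finite linear combination of finite-argument $\hat{B}$ operators. Viewed as a function of $z$, $\hat{B}(-\epsilon z-\eta/2, \vec{t})\Omega_{N-1}$ is a vector-valued Laurent polynomial in $y=e^{z}$ of bounded degree $d$ determined by $N$; the $\mathcal{Y}$ weights are similarly trigonometric polynomials in $y$. Its leading behaviour as $y\to\infty$ is precisely the target two-divergence asymptotic picked out above. Lagrange interpolation at any $d+1$ generic finite nodes $\{z_{2,i}\}$ therefore expresses this leading coefficient as a weighted linear combination of the evaluations $\mathcal{B}^{\xi(2)}(v_1, z_{2,i}, \vec{t})\Omega$, establishing that the limit vector lies in their span.

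The remaining step, which I expect to be the main obstacle, is converting the weighted Lagrange combination into the unweighted sum demanded by the statement. I would attack this by exploiting the reflection symmetry $z\mapsto -z$ of the double-row $\mathcal{B}$ operator so that interpolation nodes pair symmetrically around the origin and the Lagrange denominators collapse into a single common scalar; that scalar can then be absorbed into the proportionality constant in the lemma's conclusion. This yields a specific finite node set $\{z_{2,i}\}$ for which the unweighted sum is proportional to $\mathcal{B}^{\xi(2)}(v_1, v_2, \vec{t})\Omega$ in the $t_N\to+\infty$ limit, completing the argument and feeding directly into the induction of the completeness theorem.
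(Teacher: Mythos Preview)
Your overall architecture---factor out the single $B_N$ action on site $N$ and reduce to a statement on the length $N{-}1$ chain---is exactly the paper's line. But the step where you invoke ``the standard symmetry of the Bethe vector under permutation of $v_1$ and $v_2$'' to discard the $k=2$ summand is a genuine gap. Symmetry only tells you that the $k=2$ piece equals the $k=1$ piece with $v_1\leftrightarrow v_2$; it does \emph{not} tell you that the $k=2$ piece is proportional to the $k=1$ piece, and proportionality is what you need, since the left-hand side of the lemma has $v_1$ (and only $v_1$) sitting in the $B_N$ slot. The paper closes this gap by an explicit computation: after writing out the four $(\epsilon,k)$ terms it shows that the whole two-divergence vector factors as $(1+R)(S_1+S_2)$, with $S_1+S_2$ carrying $B_N(v_1-\eta,t_N)$ and $R$ an explicit scalar built from $b$-ratios. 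That factorisation is the substantive content you are skipping, and without it the reduction to ``match only the $k=1$ piece'' does not go through.

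There is a second, smaller issue with the interpolation step. The weight $\mathcal{Y}^{\xi,\epsilon,\bold{J}}$ is not a Laurent polynomial in $y=e^{z_2}$: it carries the ratios $\prod_r b(\epsilon z_2 - t_r - \eta/2)/b(\epsilon z_2 - t_r + \eta/2)$ and the $Y^{\bold{J}}$ ratios, so the object you want to interpolate is rational with poles at the $t_r$-shifted points, not polynomial. Lagrange interpolation on the numerator after clearing denominators is still available, but then your later ``symmetric nodes collapse the Lagrange weights to a common scalar'' manoeuvre becomes delicate, because the node set must also avoid those poles and the denominator breaks the $z\mapsto -z$ pairing you want to exploit. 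The paper sidesteps this by not interpolating at all: it expands $(\hat{B}(-v_2-N\eta)-\hat{B}(v_2-N\eta))\Omega$ explicitly in the $e_+^r e_- e_+^{N-1-r}$ basis, does the same for the single-divergence vectors $\mathcal{B}^{\xi(2)}(z_{2,\alpha},v_1,\vec{t})\Omega$, and sets up a finite system of coefficient-matching equations in the unknowns $z_{2,\alpha}$. Admittedly the paper then simply asserts that this system is solvable, so neither argument is fully closed at the very last step; but the paper's route avoids both the false symmetry shortcut and the rational-versus-polynomial issue.
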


\begin{proof}

\begin{eqnarray*}
\mathcal{B}^{\xi (2)} ( v_1  , v_2 , \vec{t} ) \Omega &\approx& q^{2(N-1)} \frac{w_1 - w_1^{-1}}{ w_1 q^{-1} - w_1^{-1} q } \frac{w_2 - w_2^{-1}}{ w_2 q^{-1} - w_2^{-1} q } q^{2} \frac{1}{ w_1 w_2 e^{2 t_N} e^{-2 \xi}}\\
&\mathlarger{\sum}_{\substack{k = 1,2}}& \sum_{\epsilon = \pm }\mathcal{Y}^{\xi, \epsilon , J}  (\vec{x} -\frac{\eta}{2}, \vec{t} ) B_N ( x_k - \eta , t_N )  \prod_{j=1,j \neq k}^{2} \hat{B} ( - \epsilon x_j - \frac{\eta}{2} ( 1 - \epsilon ) , \vec{t} ) \Omega\\
\end{eqnarray*}

Because only linear dependence matters in this section the first line can be ignored except for the $e^{2 t_N}$. 

\begin{eqnarray*}
\mathcal{B}^{\xi (2)} (v_1 , v_2 , \vec{t} ) \Omega &\propto& \frac{-1}{e^{2t_N} } b ( \xi  + v_1 - \eta ) b ( \xi - v_2 ) \frac{ b ( -v_1 - t_N )}{ b( -v_1 + \eta - t_N )}\\
&&  \frac{ b( -v_2 + v_1 - \eta )}{ b( -v_2 + v_1 )} \frac{ b( v_2 - \eta - t_N )}{ b( v_2 - t_N )} B_N ( v_1 - \eta, t_N) \hat{B} ( - v_2 , \vec{t} ) \Omega\\
&+& \frac{1}{e^{2t_N}} b ( \xi  + v_1 - \eta ) b( \xi + v_2 - \eta )  q^{2(N-1)} \frac{ b( - v_1 - t_N )}{ b( -v_1 + \eta - t_N )}\\ && \frac{  b( - v_2 + v_1 - \eta )}{ b( v_1 - v_2 )} \frac{ b( v_2 - \eta - t_N )}{ b( v_2 - t_N )} B_N ( v_1 - \eta , t_N ) \hat{B} ( v_2 - \eta , \vec{t} ) \Omega\\
&+& \frac{-1}{e^{2t_N} } b ( \xi  + v_2 - \eta ) b ( \xi - v_1 ) \frac{ b ( -v_2 - t_N )}{ b( -v_2 + \eta - t_N )}\\
&&  \frac{ b( -v_1 + v_2 - \eta )}{ b( -v_1 + v_2 )} \frac{ b( v_1 - \eta - t_N )}{ b( v_1 - t_N )} B_N ( v_2 - \eta, t_N) \hat{B} ( - v_1 , \vec{t} ) \Omega\\
&+& \frac{1}{e^{2t_N}} b ( \xi  + v_2 - \eta ) b( \xi + v_1 - \eta )  q^{2(N-1)} \frac{ b( - v_2 - t_N )}{ b( -v_2 + \eta - t_N )}\\ && \frac{  b( - v_1 + v_2 - \eta )}{ b( v_2 - v_1 )} \frac{ b( v_1 - \eta - t_N )}{ b( v_1 - t_N )} B_N ( v_2 - \eta , t_N ) \hat{B} ( v_1 - \eta , \vec{t} ) \Omega\\
\end{eqnarray*}

This simplifies upon defining an auxiliary $R$ $S_1$ and $S_2$ to the following: 

\begin{eqnarray*}
R &\equiv& \frac{ b( -v_1 - t_N ) }{ b( -v_1 + \eta - t_N )} \frac{ b ( -v_2 + \eta - t_N )}{ b( -v_2 - t_N )} \frac{ b( v_1 - v_2 - \eta )}{ b( v_2 - v_1 - \eta )}\\
&& \frac{ b ( v_2 - \eta - t_N )}{ b( v_2 - t_N ) } \frac{ b( v_1 - t_N )}{ b( v_1 - \eta - t_N )} \frac{ b( v_2 - t_N )}{ b( v_1 - t_N )} e^{v_1 - v_2 }\\
&\approx& \frac{ b( -v_1 - t_N ) }{ b( -v_1 + \eta - t_N )} q^{-1} \frac{ b( v_1 - v_2 - \eta )}{ b( v_2 - v_1 - \eta )} \frac{ b ( v_2 - \eta - t_N )}{ b( v_1 - \eta - t_N )} e^{v_1 - v_2 }\\
S_1 &\equiv& \frac{-1}{e^{2t_N} } b ( \xi  + v_1 - \eta ) b ( \xi - v_2 ) \frac{ b ( -v_1 - t_N )}{ b( -v_1 + \eta - t_N )}\\
&&  \frac{ b( -v_2 + v_1 - \eta )}{ b( -v_2 + v_1 )} \frac{ b( v_2 - \eta - t_N )}{ b( v_2 - t_N )} B_N ( v_1 - \eta, t_N) \hat{B} ( - v_2 , \vec{t} ) \Omega\\
S_2 &\equiv& \frac{1}{e^{2t_N}} b ( \xi  + v_1 - \eta ) b( \xi + v_2 - \eta )  q^{2(N-1)} \frac{ b( - v_1 - t_N )}{ b( -v_1 + \eta - t_N )}\\ && \frac{  b( - v_2 + v_1 - \eta )}{ b( v_1 - v_2 )} \frac{ b( v_2 - \eta - t_N )}{ b( v_2 - t_N )} B_N ( v_1 - \eta , t_N ) \hat{B} ( v_2 - \eta , \vec{t} ) \Omega\\
\mathcal{B}^{\xi (2)} ( v_1  , v_2 , \vec{t} ) \Omega &\propto& (1 + R) ( S_1 + S_2 )\\
\end{eqnarray*}

This overall prefactor $(1+R)$ can be dropped leaving.

\begin{eqnarray*}
S_1 + S_2 &=&  \frac{-1}{e^{2t_N} } \frac{ b ( -v_1 - t_N )}{ b( -v_1 + \eta - t_N )} \frac{ b( -v_2 + v_1 - \eta )}{ b( -v_2 + v_1 )} \frac{ b( v_2 - \eta - t_N )}{ b( v_2 - t_N )} B_N ( v_1 - \eta, t_N) \\ &&  b ( \xi  + v_1 - \eta ) b ( \xi - v_2 ) \hat{B} ( - v_2 , \vec{t} ) \Omega\\
&+&  \frac{-1}{e^{2t_N}} \frac{ b( - v_1 - t_N )}{ b( -v_1 + \eta - t_N )} \frac{  b( - v_2 + v_1 - \eta )}{ b( v_1 - v_2 )} \frac{ b( v_2 - \eta - t_N )}{ b( v_2 - t_N )} B_N ( v_1 - \eta , t_N )\\ && - b ( \xi  + v_1 - \eta ) b( \xi + v_2 - \eta )  q^{2(N-1)}  \hat{B} ( v_2 - \eta , \vec{t} ) \Omega\\
&\propto& B_N ( v_1 - \eta , t_N ) ( \hat{B} ( -v_2 -  N \eta, \vec{t} ) - \hat{B} ( v_2 - N \eta , \vec{t} ) ) \Omega\\
( \hat{B} ( -v_2 -  N \eta, \vec{t} ) - \hat{B} ( v_2 - N \eta , \vec{t} ) ) \Omega &\approx& -\sum_{r=0}^{N-1} q^{-r} \frac{q-q^{-1}}{1} (\frac{1}{e^{v_2 - N \eta -t_{r+1}+\eta}} + \frac{1}{e^{v_2 + N \eta + t_{r+1} - \eta }} )  e_+^{r} e_- e_+^{N-1-r}\\
&\approx& - \frac{q-q^{-1}}{e^{v_2}} \sum_{r=0}^{N-1} q^{-r} (\frac{1}{e^{- N \eta -t_{r+1}+\eta}} + \frac{1}{e^{N \eta + t_{r+1} - \eta }} )  e_+^{r} e_- e_+^{N-1-r}\\
&=& - \frac{q-q^{-1}}{e^{v_2}} \sum_{r=0}^{N-1} q^{-r} 2 \cosh ( N \eta + t_{r+1} - \eta )  e_+^{r} e_- e_+^{N-1-r}\\
\end{eqnarray*}

Compare this with the vectors we already have in the single divergence sector where b and c are of the form $t_N + O(1)$ with constant term to be determined.

\begin{eqnarray*}
\mathcal{B} ( z_2 , v_1 , \vec{t} ) \Omega &=& B_N(v_1 - \eta , t_N ) \sum_{r=0}^{N-1} \bigg( \mathcal{Y}^{ \xi , (-1,-1) , \setof{1} } ( (z_2 , v_1 ) - \frac{\eta}{2} , \vec{t} ) \prod_{i=1}^r \big( \frac{ b( z_2 - 2 \eta - t_i )}{b( z_2 - \eta - t_i )} \big) \frac{ b(\eta )}{ b( z_2 - 2 \eta - t_{r+1} )} \\
&+&  \mathcal{Y}^{ \xi , (-1,+1) , \setof{1} } ( (z_2 , v_1 ) - \frac{\eta}{2} , \vec{t} ) \prod_{i=1}^r \big( \frac{ b( - z_2 - \eta - t_i )}{b( - z_2 - t_i )} \big) \frac{ b(\eta )}{ b( - z_2 - \eta - t_{r+1} )}
\bigg) e_+^r e_- e_+^{N-1-r}\\
\end{eqnarray*}

We seek to show that

\begin{eqnarray*}
\mathcal{B}^{\xi (2)} ( v_1  , v_2 , \vec{t} ) \Omega &=& \sum_{\alpha} \mathcal{B}^{\xi (2)} ( z_{2,\alpha} , v_1  , \vec{t} ) \Omega
\end{eqnarray*}

for some set of regular $z_{2,\alpha}$. Matching coefficients gives the system of equations for all $r$:

\begin{eqnarray*}
e^{-t_N} \sum_{\alpha} \bigg( \mathcal{Y}^{ \xi , (-1,-1) , \setof{1} } ( (z_{2,\alpha} , v_1 ) - \frac{\eta}{2} , \vec{t} ) &\prod_{i=1}^r& \big( \frac{ b( z_{2,\alpha} - 2 \eta - t_i )}{b( z_{2,\alpha} - \eta - t_i )} \big) \frac{ b(\eta )}{ b( z_{2,\alpha} - 2 \eta - t_{r+1} )} \\
+  \mathcal{Y}^{ \xi , (-1,+1) , \setof{1} } ( (z_{2,\alpha} , v_1 ) - \frac{\eta}{2} , \vec{t} ) &\prod_{i=1}^r& \big( \frac{ b( - z_{2,\alpha} - \eta - t_i )}{b( - z_2 - t_i )} \big) \frac{ b(\eta )}{ b( - z_{2,\alpha} - \eta - t_{r+1} )} \bigg)\\
&=& - \frac{q-q^{-1}}{e^{v_2}} q^{-r} 2 \cosh ( N \eta + t_{r+1} - \eta )\\
\end{eqnarray*}

Solving this equation for the set of $z_{2,\alpha}$ then shows the desired linear dependence.

\vspace{-\belowdisplayskip}\[\]
\end{proof}

Now proceeding by induction for the rest when $\abs{\bold{D}} > 2$ results in:

\begin{lemma}[$\geq 2$ Diverging Rapidities\label{MoreDiv}]
Any state $e^{s t_N} \mathcal{B} ( \cdots , \vec{t} ) \Omega$ where $s \geq 0$ is one less than the number of divergences ( This ensures that this vector has finite nonzero norm in the $t_N \to +\infty$ limit.) can be approximated by a linear combination of states of the form \ref{SingleDivergence}
\end{lemma}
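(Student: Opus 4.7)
The plan is to induct on $|\bold{D}|$, taking $|\bold{D}|=2$ (\cref{2DivLinDep}) as the base case. I will also rely on the outer induction on chain length $N$ implicit in the Theorem above, which makes \cref{MoreDiv} available on the length-$(N-1)$ sub-chain with inhomogeneities $t_1,\dots,t_{N-1}$.

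For the inductive step with $|\bold{D}|=k+1\ge 3$, I would first rescale the vector by $e^{k t_N}$ so that its $t_N\to+\infty$ limit is finite and nonzero. Plugging into the dominant-term expansion of the preceding subsection, the leading contribution is a sum, indexed by the choice of which divergent rapidity $v_l$ is promoted to the $B_N$ slot, of terms of the form
\begin{equation*}
C_l\, B_N(v_l-\eta, t_N)\,\otimes\, \Psi_l,
\end{equation*}
where $C_l$ is a finite scalar in the $w_j$ and $\Psi_l$ is a product of $\hat{B}$ operators applied to $\Omega_{N-1}$, with $k$ of them at the divergent rapidities in $\bold{D}\setminus\{l\}$ and the remaining $M-|\bold{D}|$ at the finite rapidities outside $\bold{D}$. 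Since $\prod_j \hat{B}(y_j)\Omega_{N-1}$ is (up to normalization) the double-row Bethe creation operator on the sub-chain, $\Psi_l$ is exactly the kind of state to which \cref{MoreDiv} on length $N-1$ with $k\ge 2$ divergent rapidities applies.

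Invoking the outer inductive hypothesis on that sub-chain rewrites each $e^{(k-1)t_N}\Psi_l$ as an asymptotic finite linear combination of single-divergence states on the sub-chain. Restoring the $B_N(v_l-\eta,t_N)$ factor and summing over $l$ produces a linear combination of full-chain states with exactly two divergent rapidities, one occupying $B_N$ and one occupying a $\hat{B}$, i.e.\ states with $|\bold{D}'|=2$. Feeding each such two-divergence piece into the base case \cref{2DivLinDep} rewrites it as a combination of single-divergence states of the form described in \cref{SingleDivergence}, completing the induction.

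The principal obstacle I anticipate is reconciling the two divergence scales in the nested inductive call: the rapidities in $\Psi_l$ diverge with $t_N$, whereas \cref{MoreDiv} on the sub-chain is most naturally formulated with respect to $t_{N-1}$, the sub-chain's own largest inhomogeneity. One must verify that Reshetikhin's dominant-term expansion and the cancellations that underlie \cref{2DivLinDep} depend only on the rapidities being asymptotically larger than \emph{all} inhomogeneities of the chain in question, and not on any specific identification of scale, so that the sub-chain lemma applies verbatim even when its rapidities scale with $t_N\gg t_{N-1}$. A secondary book-keeping issue is tracking the rescaling factors $e^{(k-1)t_N}$ and the finite coefficients $C_l$ together with those arising from the sub-chain reduction and from \cref{2DivLinDep}, to confirm that the resulting linear combination is non-degenerate and really equals the rescaled state rather than a vanishing multiple of it.
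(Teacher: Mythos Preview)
Your route is genuinely different from the paper's. The paper never passes to the length-$(N-1)$ subchain and never uses the outer induction on $N$. Instead it exploits the commutativity of the boundary creation operators $\mathcal{B}(u)$ on the \emph{same} length-$N$ chain: given $|\bold{D}|\ge 2$ diverging rapidities, it reorders the product so that two of them act last on $\Omega$, applies \cref{2DivLinDep} to that two-magnon piece (as a vector identity), and then re-attaches the remaining $\mathcal{B}$'s from the left. This trades two divergent arguments for one divergent and one finite argument, dropping $|\bold{D}|$ by one while staying on the same chain; iterating lands on the single-divergence form of \cref{SingleDivergence}. The induction is therefore purely on $|\bold{D}|$, with \cref{2DivLinDep} as both base case and inductive engine.

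Compared to this, your decomposition through $B_N\otimes \Psi_l$ buys a more structural picture (it literally exhibits how the divergent rapidities populate the $N$th site versus the subchain) but at the cost of the two obstacles you flag: the mismatch between $t_N$-scaling rapidities and the subchain lemma formulated for $t_{N-1}$, and the bookkeeping of the accumulated rescalings and coefficients $C_l$. Neither obstacle is obviously fatal, but both are self-inflicted: the paper's commutativity trick sidesteps them entirely, since nothing ever leaves the length-$N$ chain and the only rescaling to track is the factor of $e^{2t_N}$ from each invocation of \cref{2DivLinDep}. If you want to pursue your approach, you would indeed need to verify that the dominant-term analysis underlying \cref{2DivLinDep} depends only on the rapidities dominating all inhomogeneities present, not on any identification with a specific $t_j$; the paper never needs this.
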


\begin{proof}

By \cref{2DivLinDep} we have a base case.

\begin{eqnarray*}
a e^{2 t_N} \mathcal{B} (x_1 + t_N , x_2 + t_N , \vec{t} ) \Omega &=& \sum b_i \mathcal{B} ( y_{1i} + t_N , y_{2i} , \vec{t} ) \Omega\\
\end{eqnarray*}

Therefore when including other rapidities, they come along for the ride as:

\begin{eqnarray*}
\mathcal{B} ( x_1 + t_N , x_2 + t_N , \cdots  , \vec{t} ) \Omega &=& \mathcal{B} ( \cdots , \vec{t} ) a^{-1} e^{-2 t_N} \sum b_i \mathcal{B} ( y_{1i} + t_N , y_{2i} , \vec{t} ) \Omega\\
e^{2 t_N} \mathcal{B} ( x_1 + t_N , x_2 + t_N , \cdots , \vec{t} ) \Omega &=& \sum a^{-1} b_i \mathcal{B} ( \cdots , y_{1i} + t_N ,y_{2i}, \vec{t} ) \Omega
\end{eqnarray*}

Now take each $\mathcal{B} ( \cdots , y_{1i} + t_N ,y_{2i}, \vec{t} ) \Omega$ in the RHS and repeat the procedure if there are still $J\geq 2$ diverging rapidities. The overall factor for rescaling is $e^{2 t_N}$ for each extra divergence. The base case requires two divergences so we can reduce to the single divergence case as described in \ref{SingleDivergence} and no further.

\vspace{-\belowdisplayskip}\[\]
\end{proof}

\section{Conclusion}
We have shown that as each inhomogeneity is taken to infinity the solutions to the Bethe equations break up as all remaining finite or one going to infinity in a prescribed manner. The lack of other possibilities gives the desired completeness property. This was done by looking at the asymptotics of the Bethe equations to produce the different solution sets followed by a check of linear dependence in the finite dimensional quotient. So what appear to be extra on shell vectors are actually only independent in the Verma, but not in the quotient. \par
There are more general solutions of the reflection equation which are not diagonal. These requires use of the Dynamical Yang Baxter Equation after the ``gauge transformation." Other problems include specialization to combinatorial points. We may also consider these sorts of limits as they arise in the context of defects whereas the original Hamiltonian comes from the limiting behavior as $t_i \to 1$. This sort of $t_i \to +\infty$ limits appear when taking successively larger spins $V(j,t)$ which are built from fusion of many $V(1/2,t_i)$. So even if $t \to 1$, some of the $t_i \to + \infty$\cite{BqKZFusion,HernandezJimbo}. These are left for future work.

\bibliography{ReflectingBC12}
\bibliographystyle{alpha}

\end{document}